\def\Bra#1{\left\langle#1\right|}
\def\Ket#1{\left|#1\right\rangle}
\theoremstyle{definition}
\newtheorem{theorem}{Theorem}
\newtheorem{proposition}[theorem]{Proposition}
\theoremstyle{definition}
\theoremstyle{remark}
\begin{document}

\title{Multipartite entanglement outperforming bipartite entanglement\\ under limited quantum system sizes}

\author{Hayata Yamasaki}
\email{yamasaki@eve.phys.s.u-tokyo.ac.jp}
\affiliation{Department of Physics, Graduate School of Science, The University of Tokyo, 7--3--1 Hongo, Bunkyo-ku, Tokyo, Japan}

\author{Alexander Pirker}
\affiliation{Institut f\"ur Theoretische Physik, Universit\"at Innsbruck, Technikerstra{\ss}e 21a, 6020 Innsbruck, Austria}

\author{Mio Murao}
\affiliation{Department of Physics, Graduate School of Science, The University of Tokyo, 7--3--1 Hongo, Bunkyo-ku, Tokyo, Japan}

\author{Wolfgang D\"{u}r}
\affiliation{Institut f\"ur Theoretische Physik, Universit\"at Innsbruck, Technikerstra{\ss}e 21a, 6020 Innsbruck, Austria}

\author{Barbara Kraus}
\affiliation{Institut f\"ur Theoretische Physik, Universit\"at Innsbruck, Technikerstra{\ss}e 21a, 6020 Innsbruck, Austria}

\date{\today}

\begin{abstract}
    Multipartite quantum entanglement serves as a resource for spatially separated parties performing distributed quantum information processing. Any multipartite entangled state can be generated from appropriately distributed bipartite entangled states by local operations and classical communication (LOCC), and in this sense, any distributed process based on shared multipartite entanglement and LOCC is simulatable by using only bipartite entangled states and LOCC\@. We show here that this reduction scenario does not hold when there exists a limitation on the size of the local quantum system of each party. Under such a limitation, we prove that there exists a set of multipartite quantum states such that these states in the set \textit{cannot} be prepared from any distribution of bipartite entanglement while the states \textit{can} be prepared from a common resource state exhibiting multipartite entanglement. We also show that temporal uses of bipartite quantum communication resources within a limitation of local system sizes are sufficient for preparing this common resource state exhibiting multipartite entanglement, yet there also exist other states exhibiting multipartite entanglement which cannot be prepared even in this setting. Hence, when the local quantum system sizes are limited, multipartite entanglement is an indispensable resource without which certain processes still cannot be accomplished.
\end{abstract}

\pacs{}
\keywords{multipartite entanglement, limitation on quantum system size, local operations and classical communication (LOCC)}

\maketitle

\section{Introduction}

Multipartite quantum entanglement ubiquitously appears in many-body quantum systems in condensed matter physics~\cite{A2} and quantum gravity~\cite{R4}, and also serves as a resource for multiparty tasks in distributed quantum information processing such as measurement-based quantum computation~\cite{R1,R2,R3}, distributed sensing~\cite{K1,E4}, and quantum networking~\cite{P3}.
Such a distributed setting is also considered as a promising candidate for realizing large-scale quantum computation due to technological limitations on the number of low-noise qubits which can be stored in a single quantum device.
In a distributed setting where spatially separable parties can freely perform local operations and classical communication (LOCC), any multipartite entangled state can be prepared by LOCC from initially distributed bipartite entangled states among the parties, using quantum teleportation~\cite{B4}.
In this regard, even if multipartite entanglement is used for a task, initially sharing bipartite entangled states is sufficient,
and hence, it would be natural to doubt whether multipartite entanglement is necessary for performing tasks by LOCC\@.

\begin{figure}[!t]
    \centering
    \includegraphics[width=0.79\linewidth]{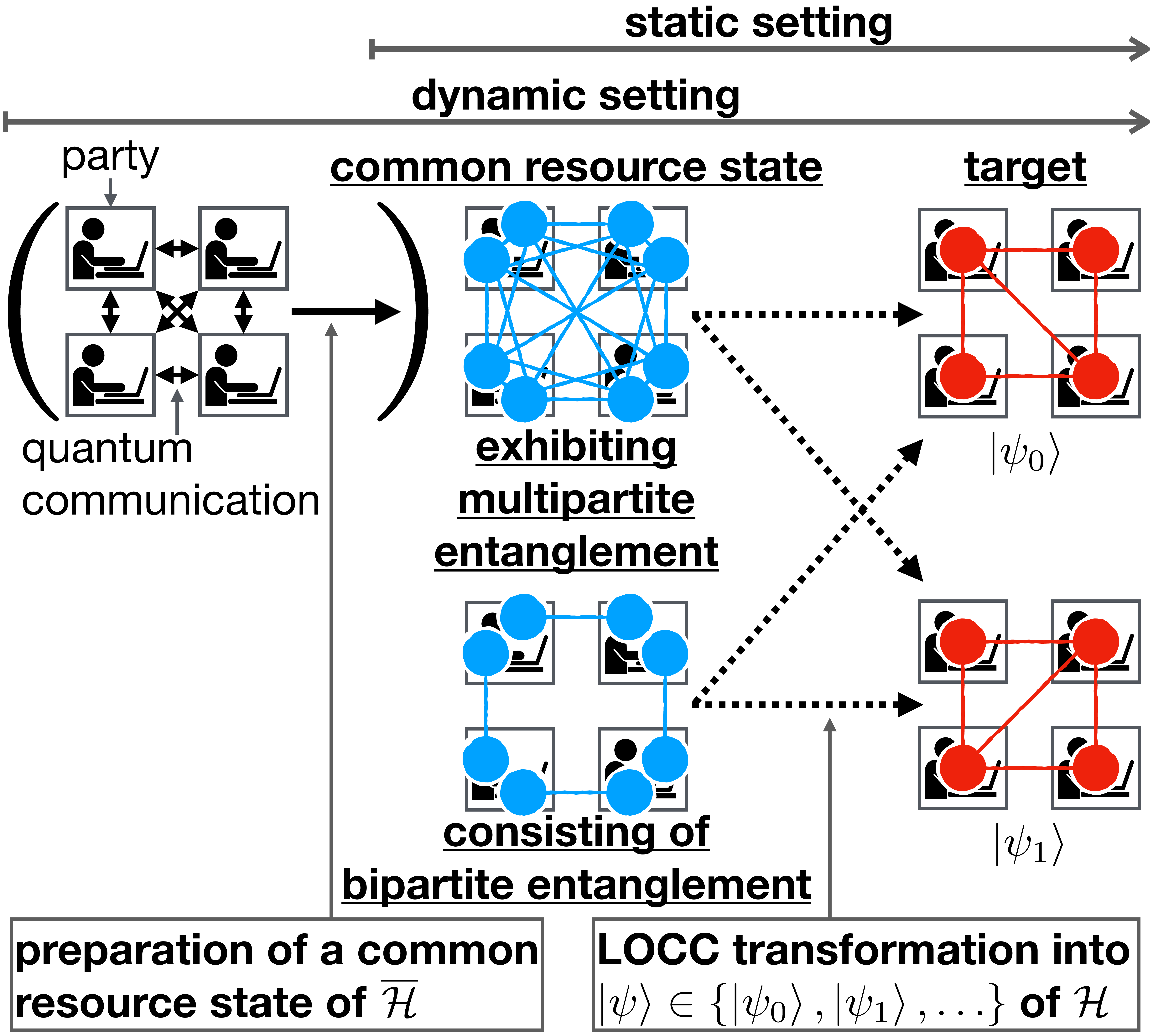}
\caption{The task of system-size-limited quantum state transformation, where the parties transform a common resource state represented by blue circles by LOCC into an arbitrary state in a given target set $\left\{\Ket{\psi_0},\Ket{\psi_1},\ldots\right\}$ represented by red circles. To differentiate the capabilities of common resource states exhibiting multipartite entanglement at the top and those consisting only of bipartite entanglement at the bottom, where each connected pair of blue circles represents a bipartite entangled state, we consider the static setting where each party's local system size for storing the common resource state is limited. We also consider the dynamic setting where the parties have to prepare a common resource state within these limitations by performing quantum communication, in addition to storing the common resource state. The difference in the capabilities arises in terms of achievability of this task.}
\label{fig:intro}
\end{figure}

In this paper, we show \textit{nontrivial} examples demonstrating the difference between entangled resource states consisting only of bipartite entanglement and those exhibiting multipartite entanglement.
This difference arises when there exists a limitation on the size of each party's local quantum system, that is, the dimension of the Hilbert space representing the local quantum system.
Our comparison between bipartite and multipartite entanglement, motivated by technological limitations on the number of qubits which can be stored in one quantum device, differs from the comparison in the context of quantum key distribution~\cite{E3,P4} as we consider the cost of LOCC to be negligible.
The difference can also be observed in a trivial example of qubits as follows.
Consider three parties $A$, $B$, and $C$ sharing two Bell states ${\left(\frac{1}{\sqrt{2}}\left(\Ket{00}+\Ket{11}\right)\right)}^{\otimes 2}$, one of which is between $A$ and $B$, and the other of which is between $B$ and $C$.
These two Bell states as a whole are regarded as a state consisting of bipartite entangled states.
In this case, once these two Bell states are given to the parties, the parties can transform the two Bell states by LOCC into any three-qubit state shared among $A$, $B$, and $C$, such as
the Greenberger-Horne-Zeilinger (GHZ) state $\Ket{\textup{GHZ}}\coloneqq\frac{1}{\sqrt{2}}\left(\Ket{000}+\Ket{111}\right)$ and the $W$ state $\Ket{W}\coloneqq\frac{1}{\sqrt{3}}\left(\Ket{100}+\Ket{010}+\Ket{001}\right)$, which we regard as states exhibiting multipartite entanglement.
However, if each party's local system size is limited to one qubit,
the parties cannot store any state consisting of a collection of bipartite entangled states to obtain $\Ket{\textup{GHZ}}$ and $\Ket{W}$ by LOCC while the parties can still store these states exhibiting multipartite entanglement as a resource for performing some task by LOCC\@.

Apart from the above trivial example of qubits,
this paper aims to demonstrate the difference even in cases where the size of local systems of some parties is not limited to one qubit.
Given an entangled state transformable into another entangled state, the former state can be considered to have more capability as a resource than the latter state.
If such a resource state having more capability is shared among parties, the parties may transform the shared resource state by LOCC into a suitable form for performing a given task.
This paradigm yields a \textit{common resource state}~\cite{S3,G3} transformable into any state in a given set, that is, a resource state having more capability than any state in the set, such as the two Bell states for the set $\left\{\Ket{\textup{GHZ}},\Ket{W}\right\}$.
We call this set of states the \textit{target set}.
Similarly, Ref.~\cite{H3} also introduces common resource states in terms of state convertibility by stochastic LOCC\@.

As illustrated in Fig.~\ref{fig:intro}, we consider two settings of state preparation tasks for differentiating capabilities of common resource states consisting only of a collection of bipartite entangled states and those exhibiting multipartite entanglement.
We call the tasks \textit{system-size-limited quantum state preparation},
where one of the two settings is called a \textit{static} setting, and the other is called a \textit{dynamic} setting.
In the static setting, we analyze each party's local system size for storing a common resource state for a given target set.
For a given target set of states of a multipartite system in general, there may not exist any common resource state in the multipartite system itself transformable by LOCC into all the states in the set.
In particular, given a multipartite system where each local dimension is $d$, almost no LOCC transformation among pure states of the system is possible~\cite{V2,S1,S2,H2,G2,S4}.
This fact implies that, in general, a common resource state for a set of multipartite states may be a state of a higher-dimensional system than that for the set itself.
If there is a limitation on each party's local system size, it may not be possible for the parties to store an entangled state of a higher-dimensional system serving as a common resource state.
Despite the efforts to understand properties of multipartite entanglement~\cite{H1,P1,E1,E2,W1,B2}, general quantitative conditions of the smallest system size for common resource states have not yet been established.
In this paper, we provide nontrivial instances where, within a given limitation on local system sizes, the preparation of a state in a given target set is \textit{not} achievable by any common resource state consisting of a collection of bipartite entangled states but it is achievable by a common resource state exhibiting multipartite entanglement.
These examples show the difference in the capabilities between these two types of common resource states.

As for the dynamic setting, in addition to considering a limitation on local system sizes for storing a common resource state, the parties are also required preparation of the common resource state within this limitation by performing quantum communication.
Some of the common resource states exhibiting multipartite entanglement analyzed in the static setting can be prepared within the limitation using quantum communication.
Hence, temporal uses of bipartite quantum communication resources are still sufficient for preparing such common resource states.
In contrast, we also show other examples of states exhibiting multipartite entanglement which can be stored but cannot be prepared within a limitation on local system sizes.

The rest of this paper is structured as follows.
In Sec.~\ref{sec:common_resource_state}, we recall the definition of a common resource state for a given target set.
In Sec.~\ref{sec:def}, we introduce the tasks of system-size-limited quantum state preparation in the static setting and the dynamic setting.
We analyze system-size-limited quantum state preparation in the static setting in Sec.~\ref{sec:analysis} and also analyze the dynamic setting in Sec.~\ref{sec:analysis2}.
Our conclusion is given in Sec.~\ref{sec:conclusion}.

\section{\label{sec:common_resource_state}Definition of common resource states}
We begin with recalling the definition of a common resource state for a given set of states under LOCC~\cite{S3,G3}.
In the following, superscripts of an operator or a ket indicate the Hilbert space on which the operator acts or to which the ket belongs.
Note that although we only consider the cases of pure states, generalization to mixed states is straightforward.

To define a target set, consider a quantum system for the states in a target set shared among $N$ parties denoted by $v_1,\ldots,v_N$.
The corresponding Hilbert space is denoted by $\mathcal{H}\coloneqq\bigotimes_{k=1}^{N}\mathcal{H}^{v_k}$, where each party $v_k$'s system is represented by a Hilbert space $\mathcal{H}^{v_k}$.
Let $S$ denote a set of states of $\mathcal{H}$ to be prepared from a common resource state,
which we call the \textit{target set}.
Note that the target set $S$ can be either finite or infinite.

For a given target set $S$ on $\mathcal{H}$,
a common resource state~\cite{S3,G3} is defined as follows.
The total system for a common resource state is denoted by $\overline{\mathcal{H}}\coloneqq\bigotimes_{k=1}^{N}\overline{\mathcal{H}}^{v_k}$, where $\overline{\mathcal{H}}^{v_k}$ for each party $v_k$ denotes the Hilbert space corresponding to $v_k$.
This total system $\overline{\mathcal{H}}$ includes $\mathcal{H}$ for a target set as a subspace, that is, $\overline{\mathcal{H}}^{v_k}\supset\mathcal{H}^{v_k}$ for each party $v_k$.
A state $\Ket\phi\in \overline{\mathcal{H}}$ is called a \textit{common resource state} for the target set $S$ under LOCC if for any state $\Ket{\psi}\in S$ there exists an LOCC protocol which transforms $\Ket\phi$ into $\Ket{\psi}$ deterministically and exactly.
Regarding a formal definition of LOCC, we refer to Ref.~\cite{C2} and the references therein.
Note that the common resource state $\Ket\phi$ for $S$ on $\mathcal{H}$ may only exist in a higher-dimensional Hilbert space $\overline{\mathcal{H}}$ than that for $S$ itself,
that is, $\dim\overline{\mathcal{H}}^{v_k}\geqq\dim\mathcal{H}^{v_k}$ for each $v_k$.

To compare bipartite and multipartite entanglement, we introduce the notion of a common resource state consisting of a collection of bipartite entangled states and that exhibiting multipartite entanglement.
In the following, common resource states are assumed to be fully entangled, that is, entangled with respect to any bipartition of the parties.
Consider a collection of bipartite entangled states distributed among the parties $v_1,\ldots,v_N$.
The distribution of the bipartite entangled states can be represented by a graph $G=(V,E)$, where each vertex in the set $V=\left\{v_1,\ldots,v_N\right\}$ represents a party, and each edge $e=\left\{v_k,v_{k'}\right\}\in E$ a bipartite entangled state $\Ket{\phi_e}^e$ shared between two parties $v_k$ and $v_{k'}$.
A common resource state $\Ket{\phi}$ for a target set $S$ is called a state \textit{consisting of bipartite entanglement} if there exists a graph $G=(V,E)$ such that $\Ket\phi$ is locally unitarily equivalent to a state in the form
$\bigotimes_{e\in E}\Ket{\phi_e}^e$.
Otherwise, $\Ket{\phi}$ is called a state \textit{exhibiting multipartite entanglement}.

For any target set $S$, we can always obtain a common resource state consisting of bipartite entanglement using quantum teleportation~\cite{B4} or a more efficient protocol proposed in Ref.~\cite{Y1}.
This common resource state consists of maximally entangled states distributed among the parties $v_1,\ldots,v_N$ according to a tree $T=(V,E)$, which is a graph including no cycle as a subgraph.
The common resource state can be written as
$\bigotimes_{e\in E}\Ket{\Phi_{M_e}^+}^e$,
where for each edge $e=\{v_k,v_{k^\prime}\}\in E$, $\Ket{\Phi_{M_e}^+}^e\coloneqq\frac{1}{\sqrt{M_e}}\sum_{l=0}^{M_e -1}\Ket{l}^{v_k}\otimes\Ket{l}^{v_{k^\prime}}$
is a maximally entangled state of Schmidt rank $M_e$ shared between $v_k$ and $v_{k'}$.
For any tree $T=(V,E)$, if an edge $e\in E$ is deleted, $T$ is divided into two disjoint trees, whose vertices are represented by disjoint sets $V_e$ and $\overline{V}_e$ satisfying $V=V_e\cup \overline{V}_e$.
For any $\Ket{\psi}\in S$ on $\mathcal{H}=\bigotimes_{k=1}^{N}\mathcal{H}^{v_k}$, we let $R_e\left(\Ket{\psi}\right)$
denote the Schmidt rank of $\Ket{\psi}$ with respect to the bipartition $\bigotimes_{v_k\in V_e}\mathcal{H}^{v_k}$ and $\bigotimes_{v_k\in \overline{V}_e}\mathcal{H}^{v_k}$ of $\mathcal{H}=\bigotimes_{k=1}^{N}\mathcal{H}^{v_k}$.
Given any $\Ket{\psi}\in S$ and a tree $T=(V,E)$, Ref.~\cite{Y1} provides the necessary and sufficient condition for the resource state $\bigotimes_{e\in E}\Ket{\Phi_{M_e}^+}^e$ being transformable into $\Ket{\psi}$ by LOCC\@.
This transformation is achievable if and only if the Schmidt rank of each bipartite maximally entangled state of the resource state is not smaller than the Schmidt rank of $\Ket{\psi}$ with respect to the corresponding bipartition,
that is, for each $e\in E$,
\begin{equation}
    \label{eq:bipartite}
    M_e\geqq R_e\left(\Ket{\psi}\right).
\end{equation}
To obtain a common resource state consisting of bipartite entanglement for $S$, it is sufficient to ensure that the condition of the Schmidt ranks given in Inequality~\eqref{eq:bipartite} is fulfilled for all the states in $S$.

\begin{figure}[t]
    \centering
    \includegraphics[width=3.4in]{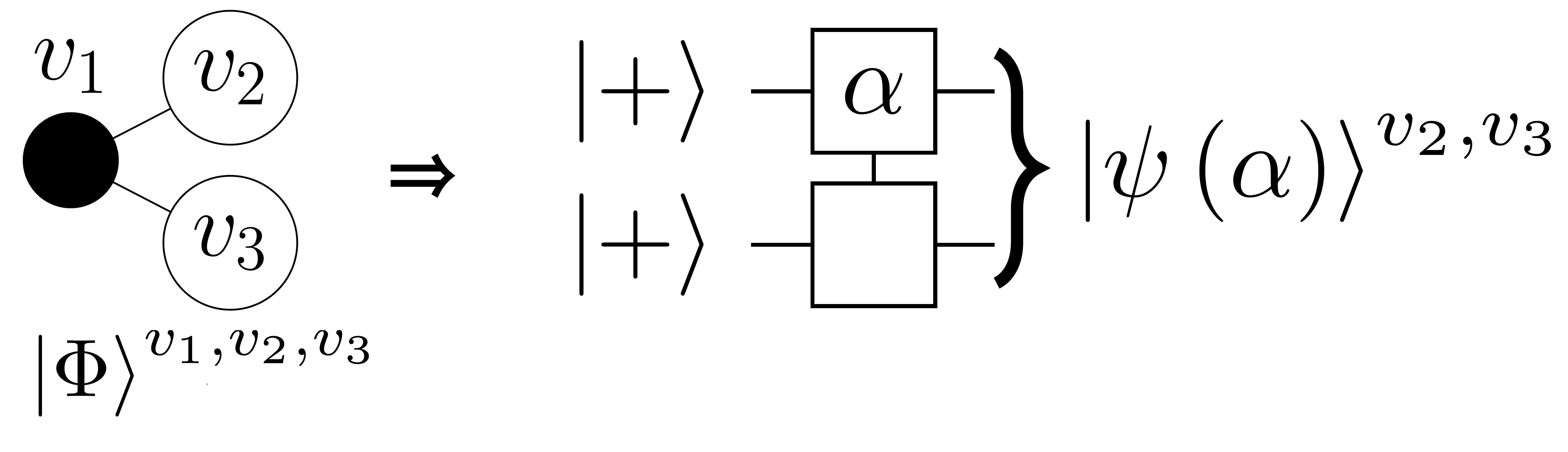}
    \caption{A simple example of a graph representing a graph state and a quantum circuit representing a class of states parameterized by $\alpha$ which can be deterministically prepared using this graph state. Given a graph state $\Ket{\Phi}^{v_1,v_2,v_3}$ as illustrated on the left, by performing the unitary $\exp\left(\textup{i}\alpha X^{v_1}\right)$ parameterized by $\alpha$ and a measurement in the $Z$ basis $\left\{\Ket{0},\Ket{1}\right\}$ on the qubit represented by the black vertex $v_1$, followed by local unitary corrections on the white vertices $v_2$ and $v_3$ conditioned by the measurement outcome, we can deterministically obtain a two-qubit state $\Ket{\psi\left(\alpha\right)}$ defined in Eq.~\eqref{eq:psi_alpha} represented by $v_2$ and $v_3$. The state $\Ket{\psi\left(\alpha\right)}$ can also be represented as the output of the quantum circuit on the right, where a two-qubit gate $\exp\left(\textup{i}\alpha Z^{v_2}\otimes Z^{v_3}\right)$ parameterized by $\alpha$ is applied to $\Ket{+}^{v_2}\otimes\Ket{+}^{v_3}$.}
\label{fig:correspondence}
\end{figure}

As a common resource state exhibiting multipartite entanglement, we can use a class of graph states proposed in Ref.~\cite{S3}.
A graph state~\cite{H4,H5} is a multiqubit entangled state characterized by a graph $G=(V,E)$.
Note that, while graphs in this paper also represent distribution of bipartite entanglement as explained above, a graph state is a different concept, which is a state exhibiting multipartite entanglement obtained for a graph $G=(V,E)$ as follows:
first, for each vertex $v_k\in V$, a qubit labeled $v_k$ is initialized as
\[
    \Ket{+}^{v_k}\coloneqq\frac{1}{\sqrt{2}}\left(\Ket{0}^{v_k}+\Ket{1}^{v_k}\right),
\]
and then, for each edge $e=\left\{v_k,v_{k^\prime}\right\}\in E$, the controlled-$Z$ gate
\begin{equation}
    \label{eq:cz}
    \begin{split}
        &CZ^{v_k,v_{k^\prime}}\\
        &\coloneqq{\left(\Ket{00}\Bra{00}+\Ket{01}\Bra{01}+\Ket{10}\Bra{10}-\Ket{11}\Bra{11}\right)}^{v_k,v_{k^\prime}}
    \end{split}
\end{equation}
is applied to two qubits labeled as $v_k$ and $v_{k^\prime}$.
Reference~\cite{S3} proposes an LOCC protocol for preparing any pure state of an arbitrary number of qubits by performing sequential projective measurements and local unitary corrections on a particular type of graph states. (See also measurement-based quantum computation~\cite{R1,R2,R3}.)
To see how this protocol works, consider the three-vertex graph shown in Fig.~\ref{fig:correspondence} as a simple example.
The graph state $\Ket{\Phi}^{v_1,v_2,v_3}$ represented by this graph is invariant under a local unitary transformation $X^{v_1}\otimes Z^{v_2}\otimes Z^{v_3}$, that is,
\[
    X^{v_1}\otimes Z^{v_2}\otimes Z^{v_3}\Ket{\Phi}^{v_1,v_2,v_3}=\Ket{\Phi}^{v_1,v_2,v_3},
\]
where $X$ and $Z$ are the Pauli operators.
Thus, if the unitary operator $\exp\left(\textup{i}\alpha X^{v_1}\right)$ parameterized by $\alpha$ is performed on qubit $v_1$, the action is equivalent to
\begin{align*}
    &\exp\left(\textup{i}\alpha X^{v_1}\right)\otimes\openone^{v_2}\otimes\openone^{v_3}\Ket{\Phi}^{v_1,v_2,v_3}\\
    &=\openone^{v_1}\otimes\exp\left(\textup{i}\alpha Z^{v_2}\otimes Z^{v_3}\right) \Ket{\Phi}^{v_1,v_2,v_3},
\end{align*}
which can be shown using the Taylor series of the exponential function.
Then, it is straightforward to verify that, performing $\exp\left(\textup{i}\alpha X^{v_1}\right)$ and a measurement in $Z$ basis $\left\{\Ket{0},\Ket{1}\right\}$ on the qubit $v_1$, we obtain a state of two qubits $v_2$ and $v_3$ which can be deterministically transformed by local unitary corrections $\openone^{v_2}\otimes\openone^{v_3}$ or $Z^{v_2}\otimes Z^{v_3}$ conditioned by the measurement outcome $\Ket{0}$ or $\Ket{1}$, respectively, into
\begin{equation}
    \label{eq:psi_alpha}
    \Ket{\psi\left(\alpha\right)}^{v_2,v_3}\coloneqq\exp\left(\textup{i}\alpha Z^{v_2}\otimes Z^{v_3}\right)\left(\Ket{+}^{v_2}\otimes\Ket{+}^{v_3}\right).
\end{equation}
In the same way, it is shown in Ref.~\cite{S3} that any quantum circuit consisting of one-qubit Clifford gates and multiqubit gates $\exp\left(\textup{i}\alpha Z\otimes Z\otimes\cdots\otimes Z\right)$ parameterized by $\alpha$ can be implemented by performing sequential projective measurements and local unitary corrections on a particular graph state corresponding to the quantum circuit.
In addition, it is shown that any pure state of an arbitrary number of qubits is locally unitarily equivalent to a pure state generated by a quantum circuit consisting of these types of gates.
Using this argument, we can obtain a graph state serving as a common resource state exhibiting multipartite entanglement for a given target set.

We remark that the above graph states for common resource states require at least one auxiliary qubit per parameter describing local unitary equivalence classes of $n$-qubit states.
Since the number of parameters of states increases exponentially with respect to $n$,
the size of the required graph states is also exponentially large.
At the same time, if we consider the target set $S$ to be a smaller subset of the $n$ qubits, the number of the parameters describing the states in $S$ can be decreased.
In this case, we may construct another class of graph states which serve as common resource states for the smaller subset $S$, which require a smaller number of auxiliary qubits than the original graph states serving as common resource states for the set of arbitrary $n$-qubit states.
We will use this observation in the subsequent section where the task of system-size-limited quantum state preparation is introduced.

\section{\label{sec:def}Definition of system-size-limited quantum state preparation}

We introduce the tasks of system-size-limited quantum state preparation.
We consider a scenario in which a multipartite system is distributed among spatially separated parties $v_1,\ldots,v_N$, and each party's local system size is limited.
The system-size-limited quantum state preparation for a given target set $S$ is a task for the parties to transform a shared common resource state into an arbitrary state $\Ket{\psi}\in S$ by performing local operations on a limited-size quantum system and classical communication.
To compare multipartite and bipartite resources, we analyze system-size-limited quantum state preparation in two settings---the \textit{static} setting and the \textit{dynamic} setting.
In this section, we first describe the definition of local operations on a limited-size quantum system and then define system-size-limited quantum state preparation in the static setting and the dynamic setting.

To clarify the meaning of local operations on a limited-size quantum system,
we assume that each party $v_k\in\left\{v_1,\ldots,v_N\right\}$ has a quantum system corresponding to a Hilbert space $\overline{\mathcal{H}}^{v_k}$ of dimension
\[
    d^{\left(v_k\right)}\coloneqq\dim\overline{\mathcal{H}}^{v_k}.
\]
The configuration of system sizes for all the parties is denoted by a tuple
\[
    \boldsymbol{d}=\left(d^{\left(v_1\right)},\ldots,d^{\left(v_N\right)}\right).
\]
As explained in Sec.~\ref{sec:common_resource_state}, the target set $S$ is given from a subspace $\mathcal{H}$ of the total system $\overline{\mathcal{H}}$.
Each party $v_k$ can perform any unitary and any measurement on the system $\overline{\mathcal{H}}^{v_k}$
but is \textit{not} allowed to add an auxiliary system to increase the dimension of $\overline{\mathcal{H}}^{v_k}$.
Measurements are represented by quantum instruments,
and while an indirect measurement may require an auxiliary working quantum system,
the protocols in this paper use only projective measurements~\footnote{For the completeness of the definition, we may allow each party $v_k$ to implement an indirect measurement using a projective measurement and one auxiliary working qubit in addition to the system $\overline{\mathcal{H}}^{v_k}$ itself.
This auxiliary working qubit has to be traced out after each measurement.
The use of only one auxiliary working qubit is sufficient for implementing any indirect measurement according to~[E.\ Andersson and D.\ K.\ L.\ Oi, Phys.\ Rev.\ A \textbf{77}, 052104 (2008).].}.
Classical information processing and classical communication without using a quantum system can be freely performed.
For a given configuration of system sizes specified by $\boldsymbol{d}$,
we assume in both the static setting and the dynamic setting that the parties can perform local operations on a limited-size quantum system in the above sense and classical communication.
We refer to this restricted LOCC as \textit{LOCC within the configuration $\boldsymbol{d}$}.

In the dynamic setting, we also allow
any two parties $v_k$ and $v_{k^\prime}$ to perform quantum communication.
When $v_k$ sends a state of a $d$-dimensional system to $v_{k^\prime}$ by quantum communication, $v_k$ has to initially store the state to be sent in a $d$-dimensional subsystem of $\overline{\mathcal{H}}^{v_k}$, and $v_{k^\prime}$ has to initialize a $d$-dimensional subsystem of $\overline{\mathcal{H}}^{v_{k^\prime}}$ as a fixed state $\Ket{0}$ so that $v_{k^\prime}$ receives the state using this subsystem.
After each quantum communication, the $d$-dimensional subsystem of $\overline{\mathcal{H}}^{v_k}$ is initialized as a fixed state $\Ket{0}$ so that $v_k$ can reuse this subsystem.
Each quantum communication from one party to another party is called one \textit{round} of quantum communication.
If a protocol includes multiple rounds of quantum communication, the multiple rounds of quantum communication are performed sequentially.
Quantum communication between the parties is allowed only if it is stated explicitly.

The system-size-limited quantum state preparation in the static setting for a configuration $\boldsymbol{d}$ of system sizes and a target set $S$ is a task for $N$ parties to achieve the following:
\begin{enumerate}
    \item A common resource state $\Ket\phi\in\overline{\mathcal{H}}$ for $S$ is given to the parties;
    \item A particular target state $\Ket{\psi}\in S$ is chosen from the target set $S$, and all the parameters of $\Ket{\psi}$ given to all the parties. Then the parties perform LOCC within the configuration $\boldsymbol{d}$ to transform the common resource state $\Ket\phi$ into this target state $\Ket\psi$.
\end{enumerate}
Our analysis concerns properties of the common resource state $\Ket\phi$ for achieving a system-size-limited quantum state preparation, that is, whether the task is achievable or not when the common resource state $\Ket\phi$ is a state consisting of bipartite entanglement or a state exhibiting multipartite entanglement.

In a similar way, we define system-size-limited quantum state preparation in the dynamic setting as follows.
The system-size-limited quantum state preparation in the dynamic setting for a configuration $\boldsymbol{d}$ of system sizes and a target set $S$ is a task for $N$ parties to achieve the following:
\begin{enumerate}
    \item A common resource state $\Ket\phi\in\overline{\mathcal{H}}$ for $S$ is prepared by the parties using quantum communication in addition to LOCC within the configuration $\boldsymbol{d}$;
    \item A particular target state $\Ket{\psi}\in S$ is chosen from the target set $S$, and all the parameters of $\Ket{\psi}$ given to all the parties. Then the parties perform LOCC within the configuration $\boldsymbol{d}$ to transform the common resource state $\Ket\phi$ into this target state $\Ket\psi$.
\end{enumerate}
In this dynamic setting, $\Ket\phi$ can be a state exhibiting multipartite entanglement as long as $\Ket\phi$ is deterministically prepared by finitely many rounds of quantum communication.
Note that while we differentiate the capabilities of common resource states consisting of bipartite entanglement and those exhibiting multipartite entanglement in the static setting, common resource states in the dynamic setting are expected to have an intermediate capability, since only temporal uses of bipartite quantum communication resources are allowed in the dynamic setting for preparing the common resource states.

In the following, we provide nontrivial examples differentiating between bipartite and multipartite entanglement in the static setting in Sec.~\ref{sec:analysis}.
Also, other examples in the dynamic setting are provided in Sec.~\ref{sec:analysis2} for differentiating the capability of the common resource states in the dynamic setting from that of the common resource states consisting of bipartite entanglement and exhibiting multipartite entanglement in the static setting.

\section{\label{sec:analysis}System-size-limited quantum state preparation in the static setting}
In this section, we analyze system-size-limited quantum state preparation in the static setting.
We show the existence of a system-size-limited quantum state preparation which is achievable by a common resource state exhibiting multipartite entanglement but not by any common resource state consisting of bipartite entanglement.

To show such a nontrivial example, consider eight parties $v_1,\ldots,v_8$.
The configuration $\boldsymbol{d}_0=\left(d_0^{\left(v_1\right)},\ldots, d_0^{\left(v_8\right)}\right)$ of the quantum system sizes are given as follows:
\begin{equation}
    \label{eq:d}
    \begin{split}
        d_0^{\left(v_k\right)}&=\dim\overline{\mathcal{H}}^{v_k} = 4,\; \dim\mathcal{H}^{v_k} = 2, \;\forall v_k\in\{v_1,\ldots,v_7\};\\
        d_0^{\left(v_8\right)}&=\dim\overline{\mathcal{H}}^{v_8} =\dim\mathcal{H}^{v_8} = 2.
    \end{split}
\end{equation}
For each $v_k\in\left\{v_1,\ldots,v_7\right\}$,
we regard the four-dimensional system $\overline{\mathcal{H}}^{v_k}$ as two qubits, where one is for the target set denoted by $\mathcal{H}^{v_k}$ and the other is an auxiliary qubit denoted by $\mathcal{H}_\textup{a}^{v_k}$, that is, $\overline{\mathcal{H}}^{v_k}=\mathcal{H}^{v_k}\otimes\mathcal{H}_\textup{a}^{v_k}$.

We define a target set $S_0$ on $\mathcal{H}=\bigotimes_{k=1}^{N}\mathcal{H}^{v_k}$ as the set of all the possible output states of a quantum circuit illustrated in Fig.~\ref{fig:target_set}.
This circuit consists of seven two-qubit gates $\exp\left(\textup{i}\alpha_i Z\otimes Z\right)$ parameterized by $\alpha_i\in\left\{\alpha_1,\ldots,\alpha_7\right\}$, where $0\leqq\alpha_i < 2\pi$ for each $\alpha_i$.
The tuple of the seven parameters is denoted by
\[
    \boldsymbol\alpha\coloneqq\left(\alpha_1,\ldots,\alpha_7\right).
\]
As input to the circuit, we consider an eight-qubit product state $\Ket{+}^{\otimes 8}\in\mathcal{H}$.
The target set $S_0$ consists of the eight-qubit output states of the circuit parameterized by $\boldsymbol\alpha$, that is,
\begin{equation}
    \label{eq:s}
    S_0\coloneqq\left\{\Ket{\psi\left(\boldsymbol\alpha\right)}\in\mathcal{H}:\boldsymbol\alpha=\left(\alpha_1,\ldots,\alpha_7\right)\right\},
\end{equation}
where each qubit is placed at one of the parties, as illustrated in Fig.~\ref{fig:target_set}.
For example, given the parameters
\[
    \boldsymbol\alpha_0\coloneqq\left(0,0,0,0,0,0,0\right),
\]
each gate in the circuit reduces to the identity and hence $\Ket{\psi\left(\boldsymbol\alpha_0\right)}=\Ket{+}^{\otimes 8}\in S_0$ is a product state.
In contrast, given the parameters
\[
    \boldsymbol\alpha_{\frac{\pi}{4}}\coloneqq\left(\frac{\pi}{4},\frac{\pi}{4},\frac{\pi}{4},\frac{\pi}{4},\frac{\pi}{4},\frac{\pi}{4},\frac{\pi}{4}\right),
\]
$\Ket{\psi\left(\boldsymbol\alpha_\frac{\pi}{4}\right)}\in S_0$ is a fully entangled state, since each gate $\exp\left(\textup{i}\frac{\pi}{4} Z\otimes Z\right)$ entangles $\Ket{+}\otimes\Ket{+}$.

Using the configuration $\boldsymbol{d}_0$ and the target set $S_0$ defined above,
we present the following two propositions on the system-size-limited quantum state preparation for $\boldsymbol{d}_0$ and $S_0$. Proposition~\ref{thm:multipartite} shows the feasibility of the system-size-limited quantum state preparation using a common resource state exhibiting multipartite entanglement while Proposition~\ref{thm:bipartite} is a no-go theorem for any common resource state consisting of bipartite entanglement.

\begin{figure}[t]
    \centering
    \includegraphics[width=3.4in]{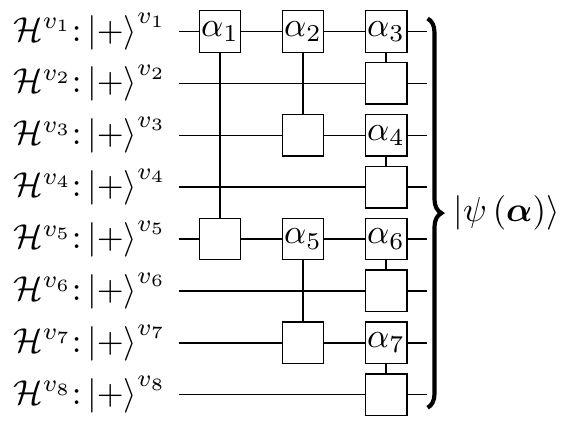}
    \caption{A quantum circuit generating all the states in the target set $S_0\coloneqq\left\{\Ket{\psi\left(\boldsymbol{\alpha}\right)}\right\}$ for the system-size-limited quantum state preparation in Propositions~\ref{thm:multipartite} and~\ref{thm:bipartite}, where $\boldsymbol{\alpha}=\left(\alpha_1,\ldots,\alpha_7\right)$ is a tuple of parameters. The wires of the circuit starting from the input $\Ket{+}^{v_1},\ldots,\Ket{+}^{v_8}$ represent qubits held by the parties $v_1,\ldots,v_8$, respectively. The circuit consists of seven two-qubit gates $\exp\left(\textup{i}\alpha_i Z\otimes Z\right)$ parameterized by $\alpha_i\in\left\{\alpha_1,\ldots,\alpha_7\right\}$.}
\label{fig:target_set}
\end{figure}

\begin{figure}[t]
    \centering
    \includegraphics[width=3.4in]{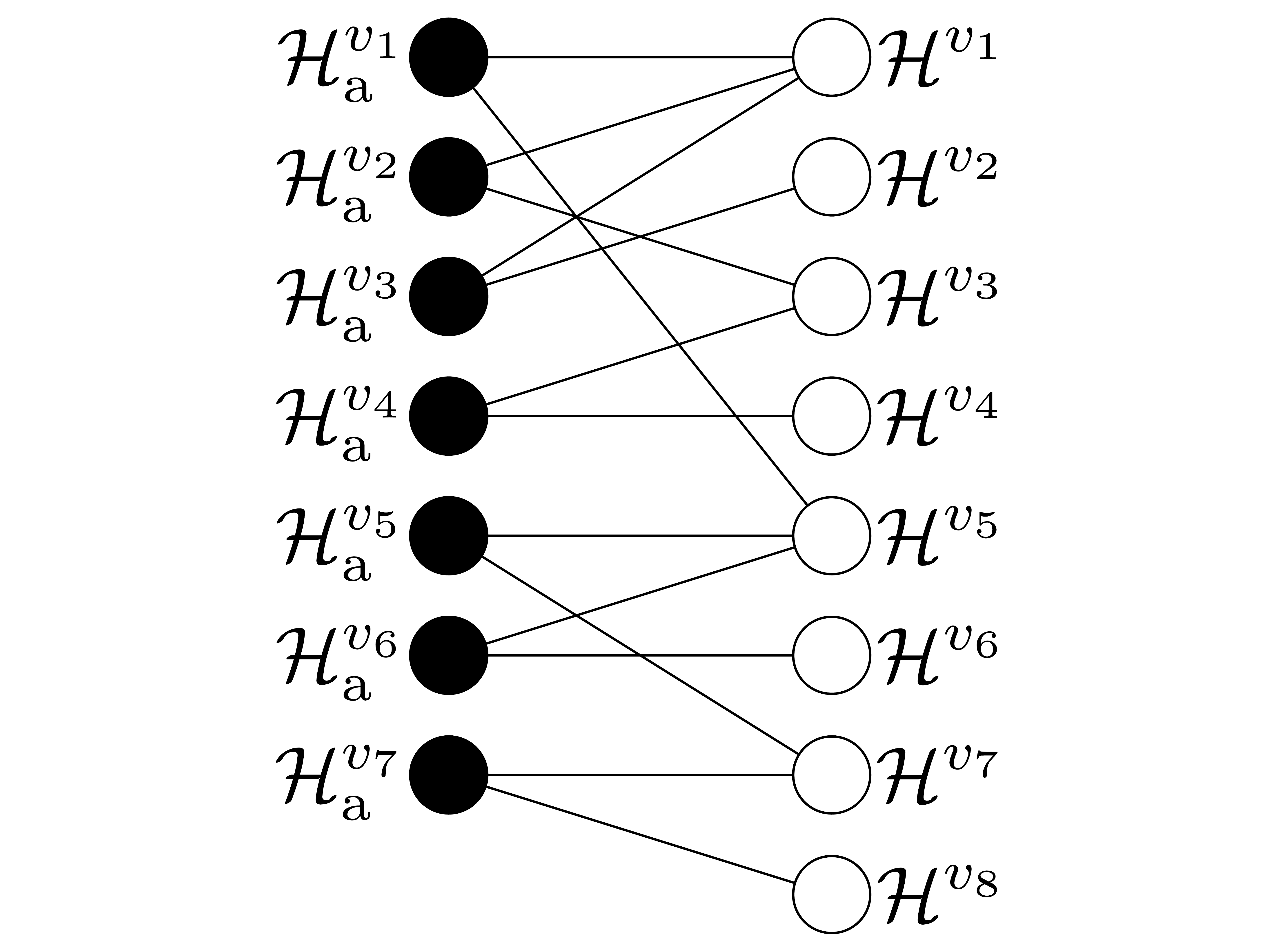}
    \caption{A graph representing a $15$-qubit graph state $\Ket{\Phi_\textup{res}}$ used as a common resource state exhibiting multipartite entanglement in Proposition~\ref{thm:multipartite}.
    Each of the parties $v_k\in\left\{v_1,\ldots,v_7\right\}$ holds two qubits $\mathcal{H}^{v_k}\otimes\mathcal{H}_\textup{a}^{v_k}$, while party $v_8$ holds one qubit $\mathcal{H}^{v_8}$.
    Eight of the $15$ qubits $\mathcal{H}^{v_1},\ldots,\mathcal{H}^{v_8}$ represented by white vertices are qubits which can be prepared in any state $\Ket{\psi\left(\boldsymbol{\alpha}\right)}$ in the target set $S_0$.
    The other seven $\mathcal{H}_\textup{a}^{v_1},\ldots,\mathcal{H}_\textup{a}^{v_7}$ represented by black vertices are auxiliary qubits to be measured.
    To obtain $\Ket{\psi\left(\boldsymbol{\alpha}\right)}\in S_0$ parameterized by $\boldsymbol\alpha=\left(\alpha_1,\ldots\alpha_7\right)$,
    each party $v_i\in\left\{v_1,\ldots,v_7\right\}$ performs the following protocol in order.
    First, a unitary $\exp\left(\textup{i}\alpha_i X\right)$ parameterized by $\alpha_i$ is performed on $\mathcal{H}_\textup{a}^{v_i}$.
     Then, the qubit $\mathcal{H}_\textup{a}^{v_i}$ is measured in the $Z$ basis $\left\{\Ket{0},\Ket{1}\right\}$, and depending on the outcome, a unitary correction is applied to the remaining qubits.
     Using this protocol, the parties can deterministically transform $\Ket{\Phi_\textup{res}}$ into $\Ket{\psi\left(\boldsymbol{\alpha}\right)}\in S_0$ for any $\boldsymbol\alpha$.}
\label{fig:tree}
\end{figure}

\begin{proposition}
\label{thm:multipartite}
    \textit{Multipartite entanglement in a system-size-limited quantum state preparation in the static setting.}
    The system-size-limited quantum state preparation in the static setting for the configuration $\boldsymbol{d}_0$ defined in Eq.~\eqref{eq:d} and the target set $S_0$ defined in Eq.~\eqref{eq:s}
    is achievable using a common resource state exhibiting multipartite entanglement.
\end{proposition}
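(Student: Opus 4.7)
The plan is to exhibit explicitly the $15$-qubit graph state $\Ket{\Phi_\textup{res}}$ drawn in Fig.~\ref{fig:tree} and verify that it is a common resource state for $S_0$ within the local-dimension budget $\boldsymbol{d}_0$, which by construction exhibits multipartite entanglement. The graph is built from the circuit of Fig.~\ref{fig:target_set} by the measurement-based-computation correspondence already exemplified in Fig.~\ref{fig:correspondence}: to every two-qubit gate $\exp(\textup{i}\alpha_i Z\otimes Z)$ I assign the auxiliary qubit $\mathcal{H}_\textup{a}^{v_i}$ held by party $v_i$, and connect it in the graph to the two main qubits on which that gate acts, plus an edge to its own main qubit $\mathcal{H}^{v_i}$ that prepares the latter in $\Ket{+}$ after the auxiliary is measured out. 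Each party $v_i$ with $i\in\{1,\ldots,7\}$ then holds exactly two qubits ($\mathcal{H}^{v_i}\otimes\mathcal{H}_\textup{a}^{v_i}$, dimension $4$) while $v_8$ holds one qubit ($\mathcal{H}^{v_8}$, dimension $2$), matching $\boldsymbol{d}_0$ in Eq.~\eqref{eq:d} exactly.

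Correctness of the protocol spelled out in the caption of Fig.~\ref{fig:tree} then follows from the stabilizer identity recalled in Sec.~\ref{sec:common_resource_state}: because $X^{\mathcal{H}_\textup{a}^{v_i}}$ tensored with $Z$ on every graph-neighbor of $\mathcal{H}_\textup{a}^{v_i}$ stabilizes $\Ket{\Phi_\textup{res}}$, a Taylor expansion turns $\exp(\textup{i}\alpha_i X^{\mathcal{H}_\textup{a}^{v_i}})$ acting on $\Ket{\Phi_\textup{res}}$ into $\exp(\textup{i}\alpha_i Z\otimes Z\otimes\cdots)$ on those neighbors. Applying the seven mutually commuting exponentials, measuring each $\mathcal{H}_\textup{a}^{v_i}$ in the $Z$ basis, and applying the standard Pauli byproduct corrections conditioned on the outcomes, deterministically yields $\Ket{\psi(\boldsymbol{\alpha})}$ on the eight main qubits for every $\boldsymbol{\alpha}$, exactly as in the three-qubit reduction leading to Eq.~\eqref{eq:psi_alpha}. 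Every operation stays inside one party's local workspace, so together with the classical messages needed for corrections the whole procedure is LOCC within the configuration $\boldsymbol{d}_0$.

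Finally I argue that $\Ket{\Phi_\textup{res}}$ is not locally unitarily equivalent to any $\bigotimes_{e\in E}\Ket{\phi_e}^e$ distributed over the eight parties. My tool is a Schmidt-rank comparison across well-chosen bipartitions of $\{v_1,\ldots,v_8\}$: the Schmidt rank of $\Ket{\Phi_\textup{res}}$ across such a bipartition is a power of $2$ determined by the binary rank of the corresponding cut submatrix of the $15$-qubit graph's adjacency matrix, whereas the Schmidt rank of any product $\bigotimes_{e\in E}\Ket{\phi_e}^e$ across the same bipartition factorizes as a product of per-edge Schmidt ranks. Picking even one bipartition whose Schmidt-rank pattern for $\Ket{\Phi_\textup{res}}$ admits no such factorization of positive integers rules out all bipartite-only resources and establishes that $\Ket{\Phi_\textup{res}}$ genuinely exhibits multipartite entanglement. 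I expect this final step to be the main obstacle, as Steps 1 and 2 are essentially bookkeeping in the stabilizer formalism, while ruling out \emph{every} LU factorization of the $15$-qubit graph state into bipartite pieces across the eight parties requires a careful case analysis over the possible graph topologies and edge dimensions compatible with $\boldsymbol{d}_0$.
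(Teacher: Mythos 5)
Your Steps 1 and 2 are exactly the paper's proof: the paper exhibits the same $15$-qubit graph state $\Ket{\Phi_\textup{res}}$ (one auxiliary qubit per $ZZ$-gate, adjacent to the two wires that gate acts on, with the eight circuit wires as the white vertices) and invokes the stabilizer identity of Sec.~\ref{sec:common_resource_state} to implement each $\exp\left(\textup{i}\alpha_i Z\otimes Z\right)$ by a local $X$-rotation and $Z$-basis measurement of $\mathcal{H}_\textup{a}^{v_i}$ followed by Pauli corrections, all of which fits inside $\boldsymbol{d}_0$. One small slip in your graph construction: the ``plus an edge to its own main qubit'' is spurious --- the auxiliary vertex for gate $i$ is adjacent to precisely the two main qubits the gate acts on (one of which \emph{is} $\mathcal{H}^{v_i}$), and adding further edges would change the state; with only the gate edges the white vertices correctly start as unentangled $\Ket{+}$'s.

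Your Step 3 is where you depart from the paper, and it is both unnecessary and, as sketched, would not work. It is not true that ``even one bipartition'' can rule out a factorization: the Schmidt rank of a graph state across any cut is a power of two, and every power of two trivially factors into positive integers (e.g.\ into Schmidt-rank-$2$ edges), so a single cut never yields an obstruction. The obstruction can only come from demanding that one fixed graph $G$ with fixed edge dimensions reproduce the Schmidt ranks of \emph{all} cuts simultaneously while respecting the local dimensions $\boldsymbol{d}_0$ --- which is precisely the case analysis carried out in the proof of Proposition~\ref{thm:bipartite}. And once Proposition~\ref{thm:bipartite} is in hand, the fact that $\Ket{\Phi_\textup{res}}$ exhibits multipartite entanglement is immediate: $\Ket{\Phi_\textup{res}}$ lives in $\overline{\mathcal{H}}$ with the dimensions fixed by $\boldsymbol{d}_0$, so if it were locally unitarily equivalent to some $\bigotimes_{e\in E}\Ket{\phi_e}^e$, that product state would itself be a common resource state consisting of bipartite entanglement for $S_0$ within $\boldsymbol{d}_0$, contradicting Proposition~\ref{thm:bipartite}. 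So the ``main obstacle'' you anticipate dissolves; you should drop the per-cut factorization argument and cite Proposition~\ref{thm:bipartite} instead.
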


\begin{proposition}
\label{thm:bipartite}
    \textit{Bipartite entanglement in a system-size-limited quantum state preparation in the static setting.}
    The system-size-limited quantum state preparation in the static setting for the configuration $\boldsymbol{d}_0$ defined in Eq.~\eqref{eq:d} and the target set $S_0$ defined in Eq.~\eqref{eq:s}
    is \textit{not} achievable using any common resource state consisting only of bipartite entanglement.
\end{proposition}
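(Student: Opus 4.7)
The plan is to exploit the tight dimension constraint $d_0^{(v_8)}=2$ together with Schmidt-rank monotonicity under LOCC: this will force any bipartite-entanglement common resource to have the form of a Hamiltonian path of EPR pairs with $v_8$ as an endpoint, and then the resulting Schmidt-rank profile of the path will be shown to be too small to cover the target set $S_0$.

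Concretely, let $\Ket\phi$ be locally unitarily equivalent to $\bigotimes_{e\in E}\Ket{\phi_e}^e$ on a graph $G=(V,E)$, with $M_e\coloneqq R_e(\Ket{\phi_e})\geqq 2$. Full entanglement of $\Ket\phi$ requires $G$ connected, and the local-dimension constraint $\prod_{e\ni v_k}M_e\leqq d_0^{(v_k)}$ forces $v_8$ to have exactly one incident edge of Schmidt rank two, and each $v_k\in\{v_1,\ldots,v_7\}$ to have degree at most two with all incident edges of Schmidt rank two. A degree count on the connected graph ($14\leqq 2|E|\leqq 1+7\cdot 2=15$) gives $|E|=7$ with degree sequence $(1,1,2,2,2,2,2,2)$, so $G$ must be a Hamiltonian path with $v_8$ at one endpoint and every edge an EPR pair; across any bipartition $V_1\,|\,V_2$, the Schmidt rank of $\Ket\phi$ then equals $2^c$ where $c$ is the number of path edges crossing the bipartition. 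Writing the path as $v_8=u_0,u_1,\ldots,u_7$, the contiguous cut $V_1=\{u_0,\ldots,u_i\}$ versus $V_2=\{u_{i+1},\ldots,u_7\}$ for any $i\in\{1,\ldots,5\}$ has $c=1$, so LOCC monotonicity of Schmidt rank requires $R_{V_1\,|\,V_2}(\Ket{\psi(\boldsymbol\alpha)})\leqq 2$ for every $\Ket{\psi(\boldsymbol\alpha)}\in S_0$ and every such $i$.

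To violate this, I would use the fact that the seven gates $\exp(\textup{i}\alpha_j Z\otimes Z)$ commute pairwise and are diagonal in the computational basis: the gates internal to either side of the cut act as local unitaries and can be absorbed, so $R_{V_1\,|\,V_2}(\Ket{\psi(\boldsymbol\alpha)})$ equals the rank of a reduced matrix indexed by the restrictions of computational basis states to the distinct endpoints of the cut-crossing gates. This rank is at least $4$ for generic $\boldsymbol\alpha$ (such as $\boldsymbol\alpha_{\pi/4}$) whenever the circuit in Fig.~\ref{fig:target_set} has two cut-crossing gates that form a matching, i.e.\ have disjoint endpoints on each side. I would verify by inspection of Fig.~\ref{fig:target_set} that for every Hamiltonian ordering $u_0=v_8,u_1,\ldots,u_7$, some $i\in\{1,\ldots,5\}$ produces such a matching, so the required Schmidt-rank inequality is violated and no bipartite-entanglement $\Ket\phi$ within $\boldsymbol d_0$ can serve as a common resource for $S_0$.

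The step I expect to be the main obstacle is this combinatorial check on the circuit: one must argue that no permutation of $\{v_1,\ldots,v_7\}$ can arrange the seven gates along a Hamiltonian path with $v_8$ at one end such that every one of the five interior one-path-edge cuts is crossed by at most one gate or only by gates sharing a common endpoint on some side. I would handle this by case analysis on the position of $v_8$ within the circuit tree, or by a pigeonhole on the seven circuit edges using the fact that the path provides only five genuinely interior cuts to absorb them while $v_8$'s endpoint position is forced.
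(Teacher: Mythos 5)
Your structural reduction is essentially the paper's: the dimension count within $\boldsymbol d_0$ forces any fully entangled bipartite-entanglement resource to be seven Schmidt-rank-two pairs arranged along a Hamiltonian path with $v_8$ at an endpoint, and Schmidt-rank monotonicity across the contiguous cuts of that path then yields the necessary condition $R_e(\Ket{\psi(\boldsymbol\alpha)})\leqq 2$. (The paper phrases this via the necessary and sufficient condition of Inequality~\eqref{eq:bipartite} from Ref.~\cite{Y1}, but for the no-go direction only monotonicity is needed, as you use.) Your criterion for violating it is also sound: at $\boldsymbol\alpha_{\frac{\pi}{4}}$ each gate is locally equivalent to a $CZ$, so $\Ket{\psi(\boldsymbol\alpha_{\frac{\pi}{4}})}$ is locally unitarily equivalent to the graph state of the circuit's interaction tree, and its Schmidt rank across a cut is $2^{r}$ with $r$ the $\mathbb{F}_2$-rank of the cut biadjacency matrix; since the interaction graph is a tree, $r\geqq 2$ exactly when the cut-crossing gates contain a two-edge matching.

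The genuine gap is the step you yourself flag: you never establish that \emph{every} one of the $7!$ Hamiltonian orderings with $v_8$ at an end admits an interior cut whose crossing gates contain a two-edge matching. This is not a detail that a generic counting argument can supply. A pigeonhole on ``seven circuit edges versus five interior cuts'' cannot work, because the conclusion is simply false for some trees: if the circuit's interaction graph were itself a path on the eight qubits, the Hamiltonian ordering coinciding with that path would have exactly one crossing gate per cut, and the line of Bell pairs \emph{would} be a common resource. (The paper's remark that shallower circuits of similar structure do not suffice is making exactly this point.) So the claim depends irreducibly on the specific topology of the circuit in Fig.~\ref{fig:target_set}, which your argument never pins down, and the paper itself settles this step only by exhaustively computing $R_e(\Ket{\psi(\boldsymbol\alpha_{\frac{\pi}{4}})})$ for all $5040$ permutations (Appendix~\ref{sec:calculation}). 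Until you either carry out that case analysis for the actual circuit or perform the equivalent computation, the proof is incomplete at precisely its load-bearing step. If you do carry it out, your matching criterion would upgrade the paper's numerical check to a purely combinatorial one, which would be a genuine improvement worth writing down explicitly.
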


In the following, we prove Propositions~\ref{thm:multipartite} and~\ref{thm:bipartite}.
Note that while shallower quantum circuits having a similar structure to the circuit in Fig.~\ref{fig:target_set} are not sufficient for differentiating between multipartite and bipartite entanglement, this example might not be the simplest, and further sets of states with the same properties will also be given after the proofs.

\begin{proof}[Proof of Proposition~\ref{thm:multipartite}]
    We provide a common resource state exhibiting multipartite entanglement for the target set $S_0$,
    which is the $15$-qubit graph state $\Ket{\Phi_\textup{res}}$ illustrated in Fig.~\ref{fig:tree} held by the parties $v_1,\ldots,v_8$.
    In the same way as explained in Sec.~\ref{sec:common_resource_state}, given the graph state $\Ket{\Phi_\textup{res}}$ in Fig.~\ref{fig:tree},
    for each $i\in\left\{1,\ldots,7\right\}$,
    performing $\exp\left(\textup{i}\alpha_i X^{v_i}\right)$ parameterized by $\alpha_i$ and a measurement in the $Z$ basis $\left\{\Ket{0},\Ket{1}\right\}$ on the qubit represented by $\mathcal{H}_\textup{a}^{v_i}$, followed by local unitary corrections on other qubits conditioned by the measurement outcome,
    the parties can obtain $\Ket{\psi\left(\boldsymbol\alpha\right)}\in S_0$ deterministically for any parameters $\boldsymbol\alpha=\left(\alpha_1,\ldots,\alpha_7\right)$.
\end{proof}

\begin{proof}[Proof of Proposition~\ref{thm:bipartite}]
    We derive a necessary condition for preparing the state $\Ket{\psi\left(\boldsymbol\alpha_{\frac{\pi}{4}}\right)}\in S_0$ from a resource state consisting of bipartite entanglement by LOCC within the configuration $\boldsymbol d_0$.
    Observe that the state $\Ket{\psi\left(\boldsymbol\alpha_{\frac{\pi}{4}}\right)}$ is fully entangled, that is, entangled with respect to any bipartition of the eight qubits.
    To prepare a fully entangled state, the resource state at party $v_8$ has to be entangled with some other parties.
    As $\dim \overline{\mathcal{H}}^{v_8}=2$, the party $v_8$ can store only one qubit of a bipartite resource state entangled with another party, which we label as $u_7\in\{v_1,\ldots,v_7\}$.
    The quantum system $\overline{\mathcal{H}}^{u_7}$ at $u_7$ is decomposed into $\overline{\mathcal{H}}^{u_7}=\mathcal{H}^{u_7}_{\{u_7,v_8\}}\otimes\mathcal{H}^{u_7}_\textup{r}$ where $\mathcal{H}^{u_7}_{\{u_7,v_8\}}$ is a system of more than one dimension for the bipartite entangled resource state shared with $v_8$, and $\mathcal{H}^{u_7}_\textup{r}$ the remaining quantum system.
    It is necessary that
    \begin{equation}
        \label{eq:dim}
        \begin{split}
            &\dim\mathcal{H}^{u_7}_{\{u_7,v_8\}}=2,\\
            &\dim\mathcal{H}^{u_7}_\textup{r}=2,
        \end{split}
    \end{equation}
    which can be shown by contradiction as follows.
    Assume that $\dim\mathcal{H}^{u_7}_{\{u_7,v_8\}}>2$.
    Then we have $\dim\mathcal{H}^{u_7}_\textup{r}<2$, and the resource state shared between the parties $u_7$ and $v_8$ cannot be entangled with any of the other parties.
    This contradicts the assumption that a fully entangled state can be prepared, and Eq.~\eqref{eq:dim} is shown.
    As $\dim\mathcal{H}^{u_7}_\textup{r}=2$, the party $u_7$ can store another single qubit of a bipartite resource state entangled with a party other than $v_8$, which we label as $u_6\in\{v_1,\ldots,v_7\}\setminus\{u_7\}$.
    By iterating the above argument, any resource state consisting of bipartite entanglement for preparing a fully entangled state by LOCC within the configuration $\boldsymbol{d}_0$ is required to be seven two-qubit entangled states shared between $u_1$--$u_2$, $\ldots$, $u_6$--$u_7$, and $u_7$--$v_8$, respectively, where
    \begin{equation}
        \label{eq:perm}
        \begin{split}
        (u_1,\ldots,u_7)~\text{is a permutation of}~(v_1,\ldots,v_7).
        \end{split}
    \end{equation}
    Note that although $u_1$ uses only one qubit in this case, the remaining system of $u_1$, which is two dimensional, cannot be used for sharing an entangled state with the other parties since there is no dimension left in the quantum systems of the other parties.
    Therefore, the distribution of the two-qubit entangled states is represented by a line-topology graph, as illustrated in Fig.~\ref{fig:permutation}.
    Note that this line-topology graph is a tree.
    Since the target set $S_0$ includes a fully entangled state $\Ket{\psi\left(\boldsymbol\alpha_\frac{\pi}{4}\right)}$,
    it is necessary that any common resource state consisting of bipartite entanglement for $S_0$ within the configuration $\boldsymbol d_0$ is a state consisting of seven two-qubit entangled states represented by the line-topology tree as shown in Fig.~\ref{fig:permutation}.

\begin{figure}[t]
    \centering
    \includegraphics[width=3.4in]{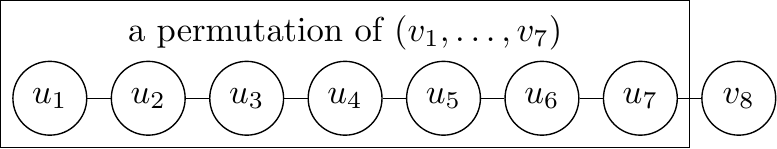}
    \caption{A line-topology tree representing a resource state consisting of bipartite entanglement to prepare a fully entangled state within the configuration $\boldsymbol{d}_0$. Since the target set $S_0$ includes a fully entangled state $\Ket{\psi\left(\boldsymbol\alpha_\frac{\pi}{4}\right)}$, the common resource states consisting of bipartite entanglement for $S_0$ have to be represented by the line-topology tree in the figure, which leads to a contradiction with the condition given in Inequality~\eqref{eq:bipartite} as shown in the main text.}
\label{fig:permutation}
\end{figure}

    We prove that the state $\Ket{\psi(\boldsymbol\alpha_\frac{\pi}{4})}$ cannot be prepared from any such resource state.
    Since any two-qubit entangled state can be obtained by LOCC from a Bell state $\frac{1}{\sqrt{2}}\left(\Ket{00}+\Ket{11}\right)$,
    it suffices to consider resource states consisting of seven Bell states represented by the line-topology tree.
    Thus, the condition given in Inequality~\eqref{eq:bipartite} implies that the state $\Ket{\psi\left(\boldsymbol\alpha_\frac{\pi}{4}\right)}$ can be prepared from resource states consisting of seven Bell states represented by a line-topology tree if and only if
    \[
        R_e\left(\Ket{\psi\left(\boldsymbol\alpha_\frac{\pi}{4}\right)}\right)\leqq 2
    \]
    for any edge $e$ of the line-topology tree.
    In other words, the Schmidt rank of $\Ket{\psi\left(\boldsymbol\alpha_\frac{\pi}{4}\right)}$ with respect to each edge of the line-topology tree needs to be smaller than or equal to $2$.
    However, the explicit calculation of $R_e\left(\Ket{\psi(\boldsymbol\alpha_\frac{\pi}{4})}\right)$ for all the edges $e$ of all the $7!=5040$ different trees obtained from the permutations of $v_1,\ldots,v_7$ in Eq.~\eqref{eq:perm} shows that, for any of the permutations, there exists an edge $e$ such that
    \begin{equation}
        \label{eq:r_e}
        R_e\left(\Ket{\psi\left(\boldsymbol\alpha_\frac{\pi}{4}\right)}\right)>2.
    \end{equation}
    For details, see Appendix~\ref{sec:calculation}.
    The calculation of $R_e\left(\Ket{\psi\left(\boldsymbol\alpha_\frac{\pi}{4}\right)}\right)$ implies that the state $\Ket{\psi(\boldsymbol\alpha_\frac{\pi}{4})}$ cannot be prepared from any resource state consisting of the seven Bell states.
    Therefore, we conclude that there exists no common resource state consisting of bipartite entanglement for the target set $S_0$ within the configuration $\boldsymbol d_0$.
\end{proof}

We remark that, regarding the required system sizes for storing a common resource state consisting of bipartite entanglement,
the above argument for Proposition~\ref{thm:bipartite} based on the Schmidt ranks of a common resource state consisting of bipartite entanglement can also apply to more general target sets.
In particular, we analyze in Appendix~\ref{sec:general} a target set $S$ of $2m$-qudit states, where the size of each qudit is $d$, and each state in $S$ has maximal Schmidt ranks with respect to any bipartition between $m$ qudits and the other $m$ qudits.
Note that random weighted graph states or random pure states fulfill this condition, for which the reduced states have almost maximum entropy for any bipartition~\cite{C4}.
We show in Appendix~\ref{sec:general} that for any resource state consisting of bipartite entanglement to obtain such a state by LOCC, or even by stochastic LOCC, there has to be at least one party for which the local quantum system size for storing this resource state needs to be almost quadratically larger than $d$, that is, greater than or equal to $d^{2-\frac{1}{m}}$.
We also note that for some special configurations of local system sizes, these differences between bipartite and multipartite entanglement do not arise, especially if $\dim\overline{\mathcal{H}}^{v_1}\geqq\prod_{k=2}^{N}\dim\overline{\mathcal{H}}^{v_k}$~\cite{R5}.

\section{\label{sec:analysis2}System-size-limited quantum state preparation in the dynamic setting}
In this section, we analyze the difference in system-size-limited quantum state preparation between the static setting and the dynamic setting.
Before analyzing multipartite cases, we first discuss a simpler bipartite case to clarify the difference  between the static setting and the dynamic setting.
Consider two parties $v_1$ and $v_2$, where each party has two qubits; that is, the configuration $\left(d^{\left(v_1\right)},d^{\left(v_2\right)}\right)$ is given by
\begin{align*}
    d^{\left(v_1\right)}&=\dim\overline{\mathcal{H}}^{v_1} = 4,\\
    d^{\left(v_2\right)}&=\dim\overline{\mathcal{H}}^{v_1} = 4.
\end{align*}
In this case, these two parties can store an entangled resource state of Schmidt rank $4$ in the static setting.
However, in the dynamic setting, the parties can prepare an entangled resource state of Schmidt rank at most $2$, which is shown as follows.
Consider any shared state $\Ket{\phi}^{v_1,v_2}$ after the last round of quantum communication for preparing $\Ket{\phi}^{v_1,v_2}$, where we assume that the direction of the quantum communication in the last round is from $v_1$ to $v_2$ without loss of generality.
Since the quantum communication sends out at least one qubit from $v_1$, the rank of $v_1$'s reduced state for $\Ket{\phi}^{v_1,v_2}$ is at most $2$; that is, the Schmidt rank of $\Ket{\phi}^{v_1,v_2}$ is at most two.
Since the Schmidt rank is monotonically nonincreasing by LOCC~\cite{L}, $v_1$ and $v_2$ after the last round of quantum communication cannot prepare an entangled resource state of Schmidt rank more than $2$, which yields the conclusion.
Although this two-party example is trivial, we also show nontrivial cases of more than two parties in the following.

We present the following two propositions.
Proposition~\ref{prp:bipartite_dynamic} shows that the common resource states available in the dynamic setting can still have more capability than any common resource state consisting of bipartite entanglement in the static setting, similarly to the common resource states exhibiting multipartite entanglement in the static setting.
In contrast, Proposition~\ref{prp:multipartite_dynamic} shows the existence of common resource states which cannot be prepared in the dynamic setting by the parties within a limitation of local system sizes, while the common resource states can still be stored within the limitation in the static setting.
This implies that the common resource states in the dynamic setting have in this case less capability than a common resource state exhibiting multipartite entanglement in the static setting.

\begin{figure*}[t]
    \centering
    \includegraphics[width=7.0in]{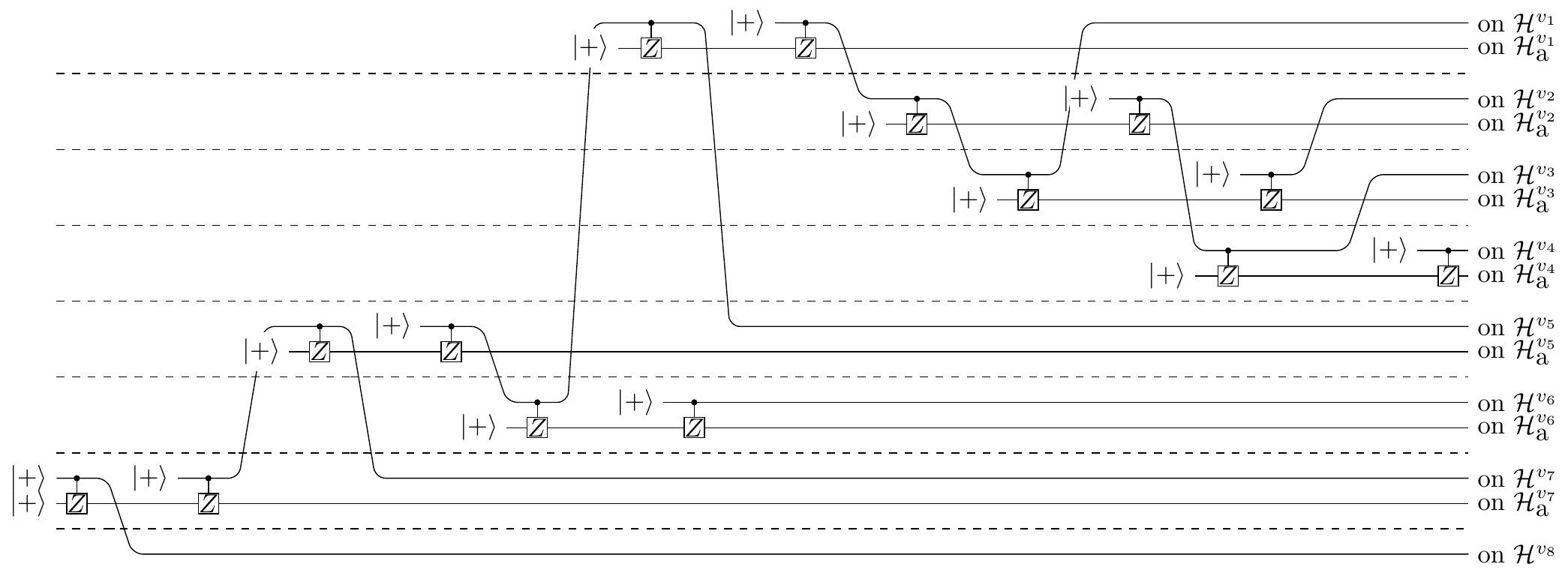}
    \caption{A quantum circuit representing a protocol for preparing the common resource state $\Ket{\Phi_\textup{res}}$ for the target set $S_0$ by quantum communication in addition to LOCC within the configuration $\boldsymbol{d}_0$. Each of the parties $v_k\in\left\{v_1,\ldots v_7\right\}$ can perform local operations on at most two qubits $\mathcal{H}^{v_k}\otimes\mathcal{H}_\textup{a}^{v_k}$, while the party $v_8$ can perform local operations on one qubit $\mathcal{H}^{v_8}$. The dashed lines represent the separation of the parties. Each wire of the circuit corresponds to a qubit corresponding to the Hilbert space on the right, and the circuit consists of the controlled-$Z$ gates $CZ$ defined in Eq.~\eqref{eq:cz} and quantum communication represented by crossings of the wires.}
\label{fig:multipartite}
\end{figure*}

\begin{proposition}
\label{prp:bipartite_dynamic}
    \textit{A common resource state in the dynamic setting having more capability than any common resource state consisting of bipartite entanglement.}
    The state $\Ket{\Phi_\textup{res}}$ in the proof of Proposition~\ref{thm:multipartite} and in Fig.~\ref{fig:tree} can be used as a common resource state for achieving the system-size-limited quantum state preparation in the dynamic setting for the configuration $\boldsymbol{d}_0$ defined in Eq.~\eqref{eq:d} and the target set $S_0$ defined in Eq.~\eqref{eq:s}, while the system-size-limited quantum state preparation in the static setting for $\boldsymbol{d}_0$ and $S_0$ cannot be achieved by any common resource state consisting of bipartite entanglement due to Proposition~\ref{thm:bipartite}.
\end{proposition}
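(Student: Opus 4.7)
The plan is to split the claim into two parts: (i) once $\Ket{\Phi_\textup{res}}$ is present, it transforms by LOCC within $\boldsymbol{d}_0$ into any $\Ket{\psi(\boldsymbol\alpha)}\in S_0$, and (ii) $\Ket{\Phi_\textup{res}}$ can itself be prepared from a product input by finitely many rounds of bipartite quantum communication, interleaved with LOCC within $\boldsymbol{d}_0$. Part (i) is already the content of Proposition~\ref{thm:multipartite}, and the no-go portion of the statement follows verbatim from Proposition~\ref{thm:bipartite}. Hence the whole proof reduces to exhibiting a preparation protocol for $\Ket{\Phi_\textup{res}}$ that never violates the storage bound at any party.

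I would first describe the initialization: each $v_k$ with $k\in\{1,\ldots,7\}$ locally initializes $\mathcal{H}^{v_k}\otimes\mathcal{H}_\textup{a}^{v_k}$ in $\Ket{+}\otimes\Ket{+}$, and $v_8$ initializes its single qubit $\mathcal{H}^{v_8}$ in $\Ket{+}$. Next, every edge of the graph in Fig.~\ref{fig:tree} that lies \emph{inside} a single party is realized by a local $CZ$, which is trivially consistent with $\boldsymbol{d}_0$. The remaining edges cross party boundaries, and each such edge $\{v_k,v_{k'}\}$ is implemented by the standard ``send--$CZ$--return'' pattern of Fig.~\ref{fig:multipartite}: the owner of one endpoint quantum-communicates the relevant qubit to the other party, who applies a local $CZ$ and sends the qubit back. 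After each transmission the sender's subsystem is reinitialized to $\Ket{0}$, as stipulated in Sec.~\ref{sec:def}, so that the return communication is itself legal.

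The main obstacle is scheduling: I must guarantee that at every intermediate moment each party's occupancy does not exceed $d_0^{(v_k)}$. The tight constraint is at $v_8$, which cannot simultaneously hold its own qubit and a received qubit. To bypass this, I would process the unique edge touching $v_8$ first in the opposite direction: $v_7$ transmits its auxiliary qubit to $v_8$, $v_8$ applies $CZ$ between that incoming qubit and its own, and then transmits it back to $v_7$. At the moment of reception $v_7$ has already emptied one of its two slots, and $v_8$ still has only its single qubit slot occupied throughout because the transient received qubit temporarily replaces nothing. All other inter-party edges link parties among $v_1,\ldots,v_7$, each of whom has two qubit slots, so a similar ``send, $CZ$, return'' step keeps every occupant $\le 2$ qubits provided the rounds are executed sequentially.

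To finish, I would verify edge by edge, in the order dictated by the circuit of Fig.~\ref{fig:multipartite}, that every party's current dimension remains within $\boldsymbol{d}_0$ and that after all $CZ$s the resulting state is precisely the graph state $\Ket{\Phi_\textup{res}}$ of Fig.~\ref{fig:tree}. Combining this with Proposition~\ref{thm:multipartite} establishes that the dynamic-setting task for $\boldsymbol{d}_0$ and $S_0$ is achievable using $\Ket{\Phi_\textup{res}}$, while Proposition~\ref{thm:bipartite} supplies the comparative no-go for any bipartite-entanglement common resource in the static setting, completing the proof.
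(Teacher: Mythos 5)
Your overall decomposition is right --- part (i) is Proposition~\ref{thm:multipartite}, the no-go half is Proposition~\ref{thm:bipartite}, and all that remains is a preparation protocol for $\Ket{\Phi_\textup{res}}$ respecting $\boldsymbol{d}_0$ --- but the protocol you propose does not work, and the failure is exactly at the point you flag as the ``main obstacle.'' Under the rules of Sec.~\ref{sec:def}, receiving a qubit is not free: the receiver must set aside a two-dimensional subsystem of $\overline{\mathcal{H}}^{v_{k'}}$ initialized to $\Ket{0}$ to absorb the incoming state. Party $v_8$ has $\dim\overline{\mathcal{H}}^{v_8}=2$, i.e.\ exactly one qubit slot, which in your scheme is already occupied by $\mathcal{H}^{v_8}$ in the state $\Ket{+}$. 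Hence $v_8$ can \emph{never} receive a qubit while holding its own; the claim that ``the transient received qubit temporarily replaces nothing'' contradicts the communication model --- the transient qubit must live somewhere in $\overline{\mathcal{H}}^{v_8}$, and there is no room. The same difficulty recurs (less fatally) among $v_1,\ldots,v_7$: your ``initialize everything in place, then send--$CZ$--return'' pattern requires the receiver to have a free slot at the moment of each cross-party edge, and once a party's two slots are both occupied by graph-state qubits it can no longer participate as a receiver, so the edge-by-edge verification you defer to the end would not go through without a substantially different schedule.

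The paper's protocol (Fig.~\ref{fig:multipartite}) avoids this by reversing the logic: entanglement is always created \emph{locally at the sender} by applying $CZ$ to freshly initialized $\Ket{+}$ qubits, and then one qubit of the resulting entangled state is shipped \emph{one-way} to its final owner, whose slot is empty precisely because that qubit has not yet arrived. No qubit is ever returned, so no party --- in particular not $v_8$ --- is ever asked to hold more than its final load. If you want to salvage your scheme, the edge touching $v_8$ must be implemented by performing the $CZ$ at $v_7$'s side \emph{before} the qubit destined for $\mathcal{H}^{v_8}$ is ever sent to $v_8$, and the remaining edges must be ordered so that each receiver still has a vacant slot; carried to its conclusion, that repair essentially reproduces the paper's one-way distribution protocol. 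You should also correct the description of Fig.~\ref{fig:multipartite}: it depicts this entangle-then-distribute circuit, not a send--$CZ$--return pattern.
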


\begin{proposition}
\label{prp:multipartite_dynamic}
    \textit{Common resource states exhibiting multipartite entanglement which cannot be prepared in the dynamic setting.}
    Consider four parties $v_1$, $v_2$, $v_3$, and $v_4$.
    Given a configuration $\boldsymbol{d}_1=\left(d_1^{\left(v_1\right)},d_1^{\left(v_2\right)},d_1^{\left(v_3\right)},d_1^{\left(v_4\right)}\right)$, where
    \begin{align*}
        d_1^{\left(v_1\right)}&=\dim\overline{\mathcal{H}}^{v_1} = 4,\\
        d_1^{\left(v_k\right)}&=\dim\overline{\mathcal{H}}^{v_k} = 2, \; \forall v_k\in\left\{v_2,v_3,v_4\right\},
    \end{align*}
    any fully entangled common resource state $\Ket{\phi}\in\overline{\mathcal{H}}$ whose Schmidt rank with respect to the bipartition between $v_1$ and $v_2 v_3 v_4$ is more than $2$ cannot be prepared in the dynamic setting, although there exists such a common resource state which can be stored in the static setting.
\end{proposition}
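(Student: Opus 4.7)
The proposition has two parts, which I would handle separately: first exhibit a fully entangled state satisfying the static requirement, then rule out dynamic preparation by a case analysis on the last round of quantum communication.

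For the static existence claim, identify $v_1$'s four-dimensional system with two qubits $A_r\otimes A_s$ and consider the explicit state
\[
\ket{\Phi}\;=\;\tfrac{1}{2}\sum_{a,b\in\{0,1\}}\ket{ab}^{A_r A_s}\,\ket{a}^{v_2}\,\ket{b}^{v_3}\,\ket{a\oplus b}^{v_4}.
\]
The four summands are orthonormal on both sides of the cut $v_1|v_2v_3v_4$, so $\ket\Phi$ has Schmidt rank $4$ there; a direct computation of each single-party reduced state $\rho^{v_j}$ for $j=2,3,4$ gives rank $2$, so $\ket\Phi$ is fully entangled. Since $\ket\Phi$ lives inside $\overline{\mathcal{H}}$, it fits the configuration $\boldsymbol{d}_1$ and is trivially a common resource state for the singleton target set $\{\ket\Phi\}$.

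For the dynamic impossibility, suppose for contradiction that a dynamic protocol $\Pi$ deterministically prepares some fully entangled $\ket\phi$ with Schmidt rank exceeding $2$ across $v_1|v_2v_3v_4$. The argument relies on two facts: (F1) LOCC within a fixed configuration is monotonically non-increasing on the Schmidt rank across any bipartition; and (F2) once a party $v_j$'s local system is in a pure product state with the rest, subsequent LOCC keeps $v_j$ in a pure state in every classical branch and in tensor product with the rest, so deterministic preparation of the target forces $\ket\phi$ itself to factorize across $v_j|\text{rest}$. I then case-analyze the final round of quantum communication in $\Pi$. If no such round occurs, the initial product state is preserved and the Schmidt rank is $1$ across every cut. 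If the last round is $v_1\to v_i$ for some $i\in\{2,3,4\}$, then $v_1$'s sent qubit subsystem is reset to $\ket 0$, confining $v_1$'s reduced state to a two-dimensional subspace of $\overline{\mathcal{H}}^{v_1}$; (F1) then bounds the final Schmidt rank across $v_1|v_2v_3v_4$ by $2$. If the last round is $v_i\to v_1$ for some $i\in\{2,3,4\}$, then since $\dim\overline{\mathcal{H}}^{v_i}=2$ the entire local system at $v_i$ becomes $\ket 0$ in product with the rest, so (F2) forces $\ket\phi$ to factorize across $v_i|\text{rest}$, contradicting full entanglement. The remaining case --- the last round is between two parties in $\{v_2,v_3,v_4\}$ --- is analogous for the sender. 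Every case yields a contradiction.

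The main obstacle I expect is a careful formalization of (F2) in the presence of classically-conditioned local operations: I would prove by induction on the steps of the LOCC protocol that in every classical branch the joint state retains its tensor-product form across $v_j|\text{rest}$ (local unitaries at $v_j$ rotate the pure factor, local measurements collapse it to another pure state, and operations elsewhere leave that factor untouched), and then use the deterministic-preparation requirement --- all classical branches collapse to the same $\ket\phi$ --- to transfer the product structure to $\ket\phi$ itself. Beyond this invariant the argument reduces to a routine enumeration of the last quantum-communication round.
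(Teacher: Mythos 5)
Your argument is correct and follows essentially the same route as the paper's proof: a case analysis on the direction of the last round of quantum communication, using Schmidt-rank monotonicity under LOCC to cap the $v_1|v_2v_3v_4$ rank at $2$ when the last round is $v_1\to v_i$, and the resetting of the sender's entire two-dimensional system to rule out the other cases via full entanglement (the paper does not bother to exhibit an explicit static witness, so that part is a welcome addition). One small point: rank-$2$ single-party reduced states alone do not certify full entanglement (a pair of Bell states also has this property), so you should also check the three two-versus-two bipartitions of your state $\ket{\Phi}$ --- it does pass, since for each such cut the four vectors on the $\{v_2,v_3,v_4\}$ side of the sum remain orthonormal.
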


First, we prove Proposition~\ref{prp:bipartite_dynamic} as follows.
\begin{proof}[Proof of Proposition~\ref{prp:bipartite_dynamic}]
    We show that the common resource state $\Ket{\Phi_\textup{res}}$ in the proof of Proposition~\ref{thm:multipartite} and in Fig.~\ref{fig:tree} can be prepared by the parties using quantum communication in addition to LOCC within the configuration $\boldsymbol{d}_0$.
    The protocol for preparing $\Ket{\Phi_\textup{res}}$ is represented by a quantum circuit illustrated in Fig.~\ref{fig:multipartite}.
    In this circuit, the parties repeatedly perform $CZ$ gates defined in Eq.~\eqref{eq:cz} to entangle qubits initialized as $\Ket{+}$, distribute one qubit of the entangled state by quantum communication, and perform a $CZ$ gate again to entangle the remaining part of the entangled state with another qubit initialized as $\Ket{+}$.
    After this protocol, the state $\Ket{\Phi_\textup{res}}$ is shared among the parties $v_1,\ldots,v_8$.
\end{proof}

Next, we prove Proposition~\ref{prp:multipartite_dynamic} in a similar way to the example given at the beginning of this section.
\begin{proof}[Proof of Proposition~\ref{prp:multipartite_dynamic}]
Consider any fully entangled state $\Ket{\phi}^{v_1,v_2,v_3,v_4}$ shared among $v_1$, $v_2$, $v_3$, and $v_4$ after the last round of quantum communication for preparing $\Ket{\phi}^{v_1,v_2,v_3,v_4}$.
The direction of the quantum communication in the last round is either of the following three possibilities:
\begin{enumerate}
    \item From $v_1$ to $v_k$ where $k\in\left\{2,3,4\right\}$;
    \item From $v_k$ to $v_{k^\prime}$ where $k,k^\prime\in\left\{2,3,4\right\}$ and $k\neq k^\prime$;
    \item From $v_k$ to $v_1$ where $k\in\left\{2,3,4\right\}$.
\end{enumerate}
Since $\Ket{\phi}^{v_1,v_2,v_3,v_4}$ is fully entangled, we exclude the latter two possibilities 2 and 3, which lead to a product state between $v_k$ and the others.
Regarding possibility 1, after sending at least one qubit from $v_1$ to $v_k$, the rank of $v_1$'s reduced state for $\Ket{\phi}^{v_1,v_2,v_3,v_4}$ is at most $2$; that is, the Schmidt rank of $\Ket{\phi}^{v_1,v_2,v_3,v_4}$ with respect to the bipartition between $v_1$ and $v_2 v_3 v_4$ is at most $2$.
Since the Schmidt rank is monotonically nonincreasing by LOCC~\cite{L}, the parties after the last round of quantum communication cannot prepare any common resource state whose Schmidt rank with respect to the bipartition between $v_1$ and $v_2 v_3 v_4$ is more than $2$, which yields the conclusion.
\end{proof}

Note that under the limitation in Proposition~\ref{prp:multipartite_dynamic}, the parties can prepare any state whose Schmidt rank with respect to the bipartition between $v_1$ and $v_2 v_3 v_4$ is not more than $2$.
This is because $v_1$'s reduced state can be represented by one qubit in this case, and hence, the parties can perform quantum communication to bring arbitrary two qubits to $v_1$ to perform any two-qubit gates.
We also remark that, while we assume in our analysis that quantum communication is performed sequentially, one can also consider simultaneous quantum communication between two parties, which is considered as a swap operation between the two.
However, this simultaneous quantum communication yields a trivial result, since the parties under the limitation in Proposition~\ref{prp:multipartite_dynamic} can prepare any state $\Ket{\Phi}\in\overline{\mathcal{H}}$ using swap operations for letting $v_1$ perform arbitrary two-qubit gates.

\section{\label{sec:conclusion}Conclusion}
We introduced and analyzed the task of system-size-limited quantum state preparation for comparing multipartite and bipartite entanglement from the viewpoint of state convertibility by local operations and classical communication (LOCC).
In contrast to previous studies on the LOCC convertibility between multipartite pure states of the \textit{same}-dimensional systems~\cite{V2,S1,S2,H2,G2,S4,T,T2}, we analyzed LOCC transformation from a common resource state~\cite{S3,G3} of a \textit{higher}-dimensional Hilbert space into a set of states of a \textit{lower}-dimensional Hilbert space.

Introducing a limitation on the size of the local system of each party, we analyzed the capabilities of common resource states exhibiting multipartite entanglement and those consisting of bipartite entanglement.
By showing a nontrivial example, we differentiate the capabilities of these common resource states in terms of achievability of the system-size-limited quantum state preparations for the same target set in the static setting where a common resource state has to be stored within a given limitation of local system sizes.
In addition to this static setting, we considered the dynamic setting where the parties may use a common resource state exhibiting multipartite entanglement, but this common resource state has to be prepared by temporal uses of bipartite quantum communication resources within the limitation of local system sizes.
We also provided nontrivial examples implying that common resource states in the dynamic setting have an intermediate capability between the common resource states exhibiting multipartite entanglement and those consisting of bipartite entanglement.

Our results provide examples implying that multipartite entanglement outperforms bipartite entanglement when limitations on the local system sizes matter in both the static setting and the dynamic setting.
Further research will be needed to establish more general connections between the system sizes for common resource states and properties differentiating multipartite and bipartite entanglement.

\acknowledgments{%
    This work was supported by JSPS Overseas Challenge Program for Young Researchers, Grant-in-Aid for JSPS Research Fellow, JSPS KAKENHI (Grant Numbers 26330006, 15H01677, 16H01050, 17H01694, 18H04286, and 18J10192), and the Austrian Science Fund (FWF) (P28000-N27, P30937-N27, and Y535-N16).
}

\appendix

\section{\label{sec:calculation}How to calculate the ranks in Inequality~\eqref{eq:r_e}}
We show that the Schmidt rank $R_e\left(\Ket{\psi(\boldsymbol\alpha_\frac{\pi}{4})}\right)$ in Inequality~\eqref{eq:r_e} can be exactly calculated with the help of a computer program.
Although computers cannot calculate irrational numbers exactly, we can reduce the Schmidt rank $R_e\left(\Ket{\psi(\boldsymbol\alpha_\frac{\pi}{4})}\right)$ of a vector $\Ket{\psi(\boldsymbol\alpha_\frac{\pi}{4})}$ with irrational elements to that of a vector only with integer elements.
To remove irrational coefficients for normalization of the state $\Ket{+}$ and the gates $\exp\left(\textup{i}\frac{\pi}{4}Z\otimes Z\right)$,
we substitute $\Ket{+}$ and $\exp\left(\textup{i}\frac{\pi}{4}Z\otimes Z\right)$ in the circuit in Fig.~\ref{fig:target_set} with $\sqrt{2}\Ket{+}$ and $\sqrt{2}\exp\left(\textup{i}\frac{\pi}{4}Z\otimes Z\right)$, respectively.
The resulting vector
\[
    \Ket{\tilde\psi\left(\boldsymbol\alpha_\frac{\pi}{4}\right)}\coloneqq 2^{\frac{15}{2}}\Ket{\psi\left(\boldsymbol\alpha_\frac{\pi}{4}\right)}
\]
has the same Schmidt ranks as $\Ket{\psi\left(\boldsymbol\alpha_\frac{\pi}{4}\right)}$ for any bipartition, and all the elements of $\Ket{\tilde\psi\left(\boldsymbol\alpha_\frac{\pi}{4}\right)}$ are complex numbers whose real and imaginary parts are both integers by construction.
Therefore, we can exactly calculate Schmidt ranks of $\Ket{\psi\left(\boldsymbol\alpha_\frac{\pi}{4}\right)}$ by calculating those of $\Ket{\tilde\psi\left(\boldsymbol\alpha_\frac{\pi}{4}\right)}$ by computer.

\section{\label{sec:general}Requirement for common resource states consisting of bipartite entanglement for preparing states having maximal Schmidt ranks}
We show the following proposition.
\begin{proposition}
    \textit{Requirement for resource states consisting of bipartite entanglement for preparing a multipartite entangled state having maximal Schmidt ranks.}
    Consider a $2m$-qudit state $\Ket{\psi}\in\mathcal{H}\coloneqq{\left(\mathbb{C}^d\right)}^{\otimes 2m}$ of local system size $d$ which has maximal Schmidt rank with respect to bipartite cuts between any $m$ qudits and the other $m$ qudits; that is, for any such bipartite cut, the Schmidt rank is $d^m$.
    If $2m$ parties $v_1,\ldots,v_{2m}$ prepare $\Ket{\psi}$ by LOCC from any resource state consisting only of bipartite entanglement,
    then there has to exist at least one party $v\in V\coloneqq\left\{v_1,\ldots,v_{2m}\right\}$ for which the local system size $\dim\overline{\mathcal{H}}^{v}$ for storing this resource state is almost quadratically larger, that is,
    \begin{equation}
        \label{eq:full_schmidt_rank}
        \max_{v_k\in V} \left\{\dim\overline{\mathcal{H}}^{v_k}\right\} \geqq d^{2- \frac{1}{m}}.
    \end{equation}
\end{proposition}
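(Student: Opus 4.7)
The plan is to recast the proposition as a weighted graph inequality and prove it by averaging over uniformly random balanced bipartitions. I would represent the resource state consisting only of bipartite entanglement by a graph $G=(V,E)$ with $V=\{v_1,\ldots,v_{2m}\}$, where each edge $e\in E$ carries a bipartite entangled state $\Ket{\phi_e}^e$ of Schmidt rank $R_e$, and set $r_e\coloneqq\log_d R_e$. Since the Schmidt rank is nonincreasing under LOCC, preparing $\Ket\psi$ requires that for every bipartition $(A,B)$ of $V$ the Schmidt rank of $\bigotimes_{e\in E}\Ket{\phi_e}^e$ across $(A,B)$, which equals $\prod_{e\in E(A,B)} R_e$ (edges not crossing the cut factor as product states on the two sides), be at least the Schmidt rank of $\Ket\psi$ across the same cut. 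Specializing to the $m$-vs-$m$ bipartitions, where by hypothesis the Schmidt rank of $\Ket\psi$ equals $d^m$, gives
\begin{equation*}
    \sum_{e\in E(A,B)} r_e \;\geqq\; m\qquad\text{for every balanced bipartition }(A,B).
\end{equation*}
Since each party's local system must contain the marginal support of every incident bipartite state, defining the weighted degree $w_v\coloneqq\sum_{e\ni v} r_e$ one obtains $\log_d \dim\overline{\mathcal{H}}^v\geqq w_v$, so it suffices to show $\max_{v\in V} w_v\geqq 2-\tfrac{1}{m}$.

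Next, I would average the cut inequality over a uniformly random unordered balanced bipartition of $V$. A direct count shows that each fixed edge $e=\{u,v\}$ crosses the partition with probability exactly $\binom{2m-2}{m-1}/\bigl(\tfrac12\binom{2m}{m}\bigr)=m/(2m-1)$, so the expected cut weight equals $\tfrac{m}{2m-1}\sum_{e\in E} r_e$. Combining this with the handshake identity $\sum_e r_e=\tfrac12\sum_{v\in V}w_v\leqq m\max_v w_v$ bounds the expected cut weight by $\tfrac{m^2}{2m-1}\max_v w_v$. On the other hand, since every balanced cut has weight at least $m$, so does its expectation. Putting these together yields $\tfrac{m^2}{2m-1}\max_v w_v\geqq m$, i.e.\ $\max_v w_v\geqq (2m-1)/m=2-\tfrac{1}{m}$, and exponentiating base $d$ gives Inequality~\eqref{eq:full_schmidt_rank}.

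The approach is elementary once the setup is right; the key conceptual step is recognizing that only the Schmidt ranks of the bipartite constituents and the cut structure of the underlying graph control both the LOCC convertibility and the per-party storage cost, reducing an entanglement question to a purely combinatorial inequality on edge-weighted graphs. The main obstacle, modest though it is, lies in carefully deriving the two inputs of the averaging step, namely the multiplicativity $\prod_{e\in E(A,B)} R_e$ for the Schmidt rank of $\bigotimes_{e}\Ket{\phi_e}^e$ across an arbitrary cut and the combinatorial probability $m/(2m-1)$; neither is hard. As a bonus, the same argument applies \emph{verbatim} to stochastic LOCC since the Schmidt rank is monotone under SLOCC, consistent with the remark on SLOCC in the main text.
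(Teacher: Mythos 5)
Your proposal is correct and is essentially the paper's own proof written additively: averaging the cut inequality over balanced bipartitions with crossing probability $\binom{2m-2}{m-1}/\bigl(\tfrac12\binom{2m}{m}\bigr)=m/(2m-1)$ is exactly the logarithm of the paper's step of multiplying the $\tfrac12\binom{2m}{m}$ cut inequalities and noting each edge occurs $\binom{2m-2}{m-1}$ times, and your bound $\max_v w_v\geqq\tfrac{1}{2m}\sum_v w_v=\tfrac1m\sum_e r_e$ is the arithmetic-mean form of the paper's $\max_{v_k}\dim\overline{\mathcal{H}}^{v_k}\geqq{\left(\prod_{v_k}\dim\overline{\mathcal{H}}^{v_k}\right)}^{1/2m}$ applied to $\prod_e{\left(M_e\right)}^2$. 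The only cosmetic difference is that you keep general bipartite states of Schmidt rank $R_e$ on an arbitrary graph while the paper first reduces to maximally entangled states on the complete graph; both routes are sound.
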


\begin{proof}
    Since any bipartite state can be obtained from a maximally entangled state, it suffices to evaluate $\dim\overline{\mathcal{H}}^{v_k}$ for storing a resource state consisting of bipartite maximally entangled states distributed according to the complete graph $K=\left(V,E\right)$, that is, the fully-connected graph for the $2m$ parties.
    We let $M_e\in\left\{1,2,\ldots\right\}$ denote the Schmidt rank of the maximally entangled state for each edge $e\in E$.

    We first derive a lower bound of the total system size for storing $\bigotimes_{e\in E}\Ket{\Phi_{M_e}^+}^e$, that is, $\prod_{e\in E}{\left(M_e\right)}^2$.
    Consider an edge cut $C$~\cite{B7} of $K$ between any $m$ vertices and the other $m$ vertices.
    Since the Schmidt rank is monotonically nonincreasing under LOCC~\cite{L},
    it is necessary that, for any $C$,
    \begin{equation}
        \label{eq:schmidt_rank_condition}
        \prod_{e\in C}M_e\geqq d^m.
    \end{equation}
    Considering Inequality~\eqref{eq:schmidt_rank_condition} for all the ${{2m\choose m}}/2$ possible choices of $C$ between any $m$ vertices and the other $m$ vertices, and taking the products of the right- and left-hand sides of these inequalities, we obtain
    \[
        \prod_{C} \prod_{e\in C} M_e\geqq d^{m\frac{{2m\choose m}}{2}}.
    \]
    Since $M_e$ for each $e \in  E$ appears ${2m-2\choose m-1}$ times in the product on the left-hand side, the last inequality can be written as
    \[
        \prod_{C} \prod_{e\in C} M_e =\prod_{e\in E}{\left(M_e\right)}^{{2m-2}\choose{m-1}}\geqq d^{m\frac{{2m\choose m}}{2}}.
    \]
    Therefore, a lower bound of the total system size is
    \[
        \prod_{e\in E}{\left(M_e\right)}^2\geqq d^{2\left(2m-1\right)}.
    \]

    Since the total system size for storing $\bigotimes_{e\in E}\Ket{\Phi_{M_e}^+}^e$ is written as
    \[
        \dim\overline{\mathcal{H}}=\prod_{v_k\in V}\dim\overline{\mathcal{H}}^{v_k},
    \]
    we have
    \[
        \prod_{v_k\in V}\dim\overline{\mathcal{H}}^{v_k}\geqq\prod_{e\in E}{\left(M_e\right)}^2\geqq d^{2\left(2m-1\right)}.
    \]
    Therefore, we obtain
    \[
        \max_{v_k\in V} \left\{\dim\overline{\mathcal{H}}^{v_k}\right\}\geqq{\left(\prod_{v_k\in V}\dim\overline{\mathcal{H}}^{v_k}\right)}^{\frac{1}{2m}}\geqq d^{2-\frac{1}{m}},
    \]
    which yields the conclusion.
\end{proof}

Note that the lower bound of local system sizes in Inequality~\eqref{eq:full_schmidt_rank} is almost sufficient for fulfilling the condition~\eqref{eq:schmidt_rank_condition} of the Schmidt ranks by storing a symmetric distribution of maximally entangled states shared between all pairs of the parties.
In this case, since each party shares maximally entangled states with the other $2m-1$ parties, the maximally entangled state corresponding to each $e\in E$ satisfies $M_e=\left\lceil d^\frac{1}{m}\right\rceil$, and the local system size for each $v\in V$ is ${\left\lceil d^\frac{1}{m}\right\rceil}^{2m-1}$, where $\lceil{}\cdots{}\rceil$ is the ceiling function.

\end{document}